\newtheorem*{corollary}{Corollary}
\newtheorem*{lemma}{Lemma}
\begin{document}


\title{Biparametric complexities and the generalized Planck radiation law}

\author{David Puertas-Centeno$^{1}$, I. V. Toranzo$^{1}$ and
J. S. Dehesa$^{1}$}

\address{
  $^{1}$Instituto Carlos I de F\'{\i}sica Te\'orica y
Computacional and Departamento de F\'{\i}sica At\'omica Molecular
y Nuclear, Universidad de Granada, Granada 18071, Spain.
}

\begin{abstract}
Complexity theory embodies some of the hardest, most fundamental and most challenging open problems in modern science. The very term \textit{complexity} is very elusive, so that the main goal of this theory is to find meaningful quantifiers for it. In fact we need various measures to take into account the multiple facets of this term. Here some biparametric Crámer-Rao and Heisenberg-Rényi measures of complexity of continuous probability distributions are defined and discussed. Then, they are applied to the  blackbody radiation at temperature $T$ in a $d$-dimensional universe. It is found that these dimensionless quantities do not depend on $T$ nor on any physical constants. So, they have an universal character in the sense that they only depend on the spatial dimensionality. To determine these complexity quantifiers we have calculated their dispersion (typical deviations) and entropy (Rényi entropies and the generalized Fisher information) constituents. They are found to have a temperature-dependent behavior similar to the celebrated Wien's displacement law of the dominant frequency $\nu_{max}$ at which the spectrum reaches its maximum. Moreover, they allow us to gain insights into new aspects of the $d$-dimensional blackbody spectrum and about the quantification of quantum effects associated with space dimensionality.
\vskip
0.5cm

\noindent
Keywords: Biparametric measures of complexity of probability distributions, Information theory of the blackbody radiation in a multidimensional universe, cosmic microwave background, Planck distribution, Wien’s law, disequilibrium, Shannon entropy, Fisher information, Crámer-Rao complexity, Fisher-Shannon complexity, LMC complexity, Heisenberg frequency, Shannon frequency, Fisher frequency.

\end{abstract} 
\maketitle

\section{Introduction}

The quantum many-body systems are not merely complicated in the way that machines are complicated but they are intrinsically complex in ways that are fundamentally different from any product of design. This intrinsic complexity makes them difficult to be fully described or comprehended. Moreover, in order to substantiate our intuition that complexity lies between perfect order and perfect disorder (i.e., maximal randomness), the ultimate goal of complexity theory is to find an operationally meaningful, yet nevertheless computable, quantifier of complexity. Many efforts have been done to understand it by using concepts extracted from information theory and density functional methods (see e.g., \cite{arndt,parr89,frieden04,sen2012}). First, they used information entropies (Fisher information \cite{fisher} and Shannon, Rényi and Tsallis entropies \cite{shannon_49,renyi_70,tsallis0}) of the one-body densities which characterize the quantum states of the system. These quantities describe a single aspect of oscillatory (Fisher information) and spreading (Shannon, Rényi and Tsallis entropies) types of the quantum wavefunction. However, this is not enough to describe and quantify the multiple aspects of the complexity of natural systems from particle physics to cosmology \cite{gellmann0,gellmann,badii,sen2012,seitz}. In fact there is no general axiomatic formalization for the term \textit{complexity} (see a recent related effort \cite{rudnicki1}), but various quantifiers which take simultaneously into account two or more aspects of it. Most relevant up until now are the two-factor complexity measures  of Crámer-Rao \cite{dehesa_1,antolin_ijqc09}, Fisher-Shannon \cite{romera_1,angulo_pla08} and LMC (Lopez-ruiz-Mancini-Calvet)\cite{lopez95,catalan_pre02,anteneodo} types. They quantify the combined balance of two macroscopic aspects of the quantum probability density of the systems, and satisfy a number of interesting properties: dimensionless, bounded from below by unity \cite{dembo,guerrero}, invariant under translation and scaling transformation \cite{yamano_jmp04,yamano_pa04}), and monotone in a certain sense \cite{rudnicki1}. Later on, some generalizations of these three basic quantities have been suggested which depend on one or two parameters, such as the measures of Fisher-Rényi \cite{romera08,antolin09,romera09,antolin_ijqc09,rudnicki} and LMC-Rényi \cite{pipek,lopez,lopezr,sanchez-moreno} types. \\

This article has two goals. First, we introduce two biparametric measures of complexities for continuous probability densities, which are qualitatively different from all the previously known ones, generalizing some of them (Crámer-Rao, LMC); namely, the generalized Crámer-Rao (or Fisher-Heisenberg) and the Heisenberg-Rényi measures. Then, we discuss their main properties. Second, we apply these two complexity measures to the generalized Planck radiation law, which gives the spectral frequency density of a blackbody at temperature $T$ in a $d$-dimensional universe. This quantum object has played a fundamental role since the pionnering works of Planck at the birth of quantum mechanics up until now from both theoretical \cite{cardoso,ramos1,ramos2,alnes,garcia,lehoucq,stewart,nozari,zeng,Tor,tsallis} and experimental \cite{ade,maia,martinez,mather0,mather} standpoints. Keep in mind e.g. that the cosmic microwave background radiation which baths our universe today is known to be the most perfect blackbody radiation ever observed in nature, with a temperature of about 2.7255(6) Kelvin \cite{mather0,mather,bennett,ade1}. Beyond the temperature, we will focussed on the dependence of the complexity quantities on the space dimensionality $d$; mainly, because this variable is crucial in the analysis of the structure and dynamics of natural systems and phenomena from condensed matter to high energy physics, cosmology and quantum infomation (see e.g. \cite{dehesa2,acharyya,brandon,krenn,march,spengler,rybin,salen} and the monographs \cite{weinberg,herschbach,dong,dehesa1}). The $d$-dependence of the entropy-like and complexity-like quantities of the $d$-dimensional hydrogenic and harmonic systems has been recently reviewed \cite{dehesa2} up until 2012, and more recently the three basic complexity measures (Crámer-Rao, Fisher-Shannon and LMC) of the $d$-dimensional blackbody have been shown to have an universal character in the sense that they depend neither on temperature nor on the Planck and Boltzmann constants, but only on the space dimensionality $d$. In this work, we will prove that a similar statement can be argued for the two biparametric measures of complexity mentioned above.  \\

The structure of the article is the following. In section II some spreading quantities (typical deviations, Rényi entropy, biparametric Fisher information) of a general continuous one-dimensional probability distribution are considered, and their meanings and properties relevant to this work are briefly given and discussed. In addition, two biparametric complexity measures of Crámer-Rao and Heisenberg-Rényi character are defined in terms of the previous spreading quantities. In Section III the central moments, Rényi entropy and generalized Fisher information are studied analytically and numerically for the $d$-dimensional blackbody spectrum in terms of its temperature and the space dimensionality. This research allows to conclude that these measures could be used as quantifiers of the spatial anisotropy whose details are being investigated at present in a more precise way with the most modern astronomical tools. In particular, the generalized Fisher information (due to its strong sensitivity to the spectrum fluctuations) could contribute to the elucidation of the origin of the cosmic microwave background anisotropies.

 Then, in section IV the generalized measures of complexity of the blackbody spectrum are investigated, finding that the biparametric complexities (Crámer-Rao and Heisenberg-Rényi) of the $d$-dimensional blackbody are dimensionless and, moreover, they do not depend on the temperature $T$ of the system nor on any physical constant (e.g., Planck's constant, speed of light, Boltzmann's constant). Thus, they are universal quantities since they only depend on the spatial dimensionality.

 Finally, in section V some concluding remarks are given, and various open problems are pointed out relative to the new complexity measures as well as the frequency distribution of a tri- and $d$-dimensional blackbody in order to shed some more light on the knowledge of the radiation that baths our universe.

\section{Basic and extended measures of complexity}

In this Section first we briefly give the three basic complexity measures of a probability distribution; namely, the Crámer-Rao, Fisher-Shannon and LMC complexities. Then, we define two novel families of complexity measures (the biparametric Crámer-Rao and Heisenberg-Rényi complexities) which generalize the previous ones.

\subsection{Basic complexities}

Let us consider a general one-dimensional random variable $X$ characterized by the continuous probability distribution $\rho(x)$,
$x \in \Lambda \subseteq \mathbb{R}$. Obviously it is asumed that the density is normalized to unity, so that $\int_{\Lambda} \rho(x) dx  =  1$. The basic measures of complexity of Crámer-Rao, Fisher-Shannon and LMC types are defined by means of the expressions
  \begin{eqnarray}
     \label{Cramer_rao}
     C_{CR}\left[\rho\right]&=& F\left[\rho\right] \, V\left[\rho\right],\\
     \label{Complejidad_de_Fisher-Shannon}
     C_{FS}\left[\rho\right]&=&\frac{1}{2 \pi e} F\left[\rho\right] \, \exp \left(2 S\left[\rho\right]\right),\\
      \label{Complejidad_LMC}
      C_{LMC}\left[\rho\right]&=& D\left[\rho\right] \, \exp \left(S\left[\rho\right]\right),
   \end{eqnarray}
  respectively. The symbols $F[\rho]$, $V[\rho]$, $S[\rho]$, and $D[\rho]$ denote the standard Fisher information \cite{fisher,frieden04}
\begin{equation}
F[\rho]=\int_{\Delta}\frac{|\rho'(x)|^{2}}{\rho(x)}dx,
\label{eq:inffish}
\end{equation}
the variance (see e.g.\cite{hall_pra99})
\begin{equation}
     \label{varianza}
     V\left[\rho\right]=\langle x^2 \rangle-\langle x \rangle^2; \quad \langle f\left(x\right) \rangle=\int_{\Delta} f\left(x\right) \rho(x) \,dx,
  \end{equation}
the Shannon entropy \cite{shannon_49}
\begin{equation}
S[\rho]=-\int_{\Delta}\rho(x)\ln[\rho(x)]dx,
\label{eq:entropshan}
\end{equation}
and the disequilibrium \cite{onicescu}
\begin{equation}
D[\rho]= \int_{\Delta}[\rho(x)]^{2}dx, 
\label{eq:disequilibrium}
\end{equation}
of the probability density $\rho(x)$, respectively. The Fisher information quantifies the gradient content or pointwise concentration of the probability over its support interval $\Lambda$. The variance, the Shannon entropy and the disequilibrium measure the following spreading properties of $\rho(x)$:  the concentration of the density around the centroid $\langle x \rangle$, the total extent to which the density is in fact concentrated, and the separation of the density with respect to equiprobability, respectively. Note that the Fisher information has a property of locality because it  is very sensitive to the fluctuations of the density, contrary to the three spreading quantities which have a global character because they are power functionals of the density. The property of locality is very important in the quantum-mechanical description of physical systems, because their associated wavefunctions are inherently oscillatory for all quantum states except at the ground case. \\  
  
Therefore, the Crámer-Rao, Fisher-Shannon and LMC complexities of $\rho(x)$ are statistical measures of complexity which quantify the combined balance of two aspects of the density described by their two associated spreading components of dispersion and entropic character. Both the Crámer-Rao and Fisher-Shannon complexities have a local-global character but in a different sense: The Crámer-Rao complexity $C_{CR}\left[\rho\right]$ quantifies the gradient content of $\rho(x)$ and the probability concentration around its centroid, and the Fisher-Shannon complexity $C_{FS}\left[\rho\right]$ measures the gradient density jointly with the total extent of the density in the support interval as given by the squared Shannon entropy power. The LMC complexity $C_{LMC}\left[\rho\right]$ has a global-global character because it measures simultaneously two global spreading aspects of $\rho(x)$: the disequilibrium and the total extent of the density as given by the Shannon entropy power. These three dimensionless complexity measures are known to be bounded from below by unity \cite{dembo,guerrero}, and invariant under translation and scaling transformation \cite{yamano_jmp04,yamano_pa04}. The question whether these quantities are minimum for the two extreme (or \textit{least complex}) distributions corresponding to perfect order and maximum disorder (associated to an extremely localized Dirac delta distribution and a highly flat distribution in the one dimensional case, respectively) is a long standing and controverted issue \cite{shiner,sanchez-moreno} which has been partially solved. Indeed, these three statistical measures have been recently shown to be monotone in a well-defined sense \cite{rudnicki1}.

\subsection{Extended complexities}
  
Now, inspired by Lutwak et al' efforts \cite{lutwak}, we introduce two generalized statistical measures of complexity of local-global character (the biparametric Crámer-Rao or Fisher-Heisenberg and Heisenberg-Rényi complexities) which extend the basic complexity measures mentioned above. For this purpose we take into account the p\textit{th}-typical deviation  (or p\textit{th} absolute deviation with respect to the middle value) $\sigma_p[\rho]$ of the probability density $\rho(x)$ defined as
\begin{equation}
\label{centmom}
 \sigma_p[\rho] =\left\{\begin{array}{c} e^{\int_{\Delta} \rho(x) \ln |x-\langle x\rangle | dx},\quad \text{if}\,\, p=0
 \\ 
 \left(\int_{\Delta}\left|x-\langle x\rangle\right|^p \rho(x)\,dx\right)^\frac1p,\quad \text{if}\,\, 0<p<\infty \\ 
  ess\sup\{|x-\langle x\rangle|:\rho(x)>0\},\quad \text{if}\,\, p=\infty\\
  \end{array}\right.
 \end{equation}
and the Rényi entropic power defined as 
\begin{equation}
\label{eq:renyilength}
N_\lambda[\rho] = e^{R_{\lambda}[\rho]},
\end{equation}
where $R_{\lambda}[\rho]$ denotes the standard or monoparametric Rényi entropy  of order $\lambda$ \cite{renyi_70} given by  \begin{equation}
R_{\lambda}[\rho]=\frac{1}{1-\lambda}\ln\left(\int_{\Delta}[\rho(x)]^{\lambda}dx\right);\quad \lambda>0,\, \lambda\neq1.
\label{eq:entropRen}
\end{equation}
Note that the the $p$-typical deviations quantify different facets (governed by the parameter $p$) of the concentration of the probability density around the centroid, and the $\lambda$-Rényi entropic powers measure various aspects (governed by $\lambda$) of the global spreading of the probability density along its support interval. In particular we have that
\begin{equation*}
 N_\lambda[\rho] =\left\{\begin{array}{c} \text{Length of the support},\quad \text{if}\,\, \lambda =0
 \\ 
 e^{-\langle\ln\rho\rangle},\quad \text{if}\,\, \lambda=1 \\ 
  \langle\rho\rangle^{-1}\quad \text{if}\,\, \lambda=2
  \\
\rho_{max}^{-1},\quad \text{if}\,\, \lambda \to \infty.\\
  \end{array}\right.
 \end{equation*}
It is also worth to realize the well-known fact that, when $\lambda$ tends to unity, the Rényi entropy $R_{\lambda}[\rho]$ tends to the Shannon entropy $S[\rho]$.\\

Besides, to define the novel complexity quantifiers we need to consider the (scarcely known) biparametric ($p,\lambda$)-Fisher information \cite{lutwak} defined as
\begin{equation}
  \label{eq:genfish}
\phi_{p,\lambda}[\rho]=\left\{\begin{array}{c}ess\sup\{|\rho(x)^{\lambda-2}\rho'(x)|^\frac1\lambda:x\in\Delta\},\quad \text{if}\,\, p=1
 \\ 
 \left(\int_{\Delta} \left|[\rho(x)]^{\lambda-2}\rho'(x)\right|^{q}\rho(x)\,dx\right)^{\frac{1}{q\lambda}}
 , 1<p<\infty, \, =1\\ 
 \left(\text{Total variation of}\, \frac{\rho(x)^\lambda}{\lambda}\right)^{\frac1\lambda},\,\, p \to\infty\\
  \end{array}\right.
 \end{equation}
with $\frac{1}{p}+\frac{1}{q}=1$, $p\in(1,\infty)$,  and $\lambda\in\mathbb{R}$. Note that for the particular values $(p,\lambda)=(2,1)$, this generalized measure reduces to the standard Fisher information $F[\rho]$ in the sense that $\phi_{2,1}[\rho]^2 = F[\rho]$. It is then clear that the $(p,\lambda)$-Fisher informations quantify various fluctuation-like facets (governed by the parameters $p$ and $\lambda$) of the probability density $\rho(x)$, including the gradient content (when $p=2$ and $\lambda=1$) . \\
  
The biparametric ($p,\lambda$)-Crámer-Rao (also called by biparametric Fisher-Heisenberg) complexity is defined as
 \begin{equation}
 \label{bcr}
 C_{CR}^{(p,\lambda)}[\rho]=\mathcal K_{CR}(p,\lambda)\,\,\phi_{p,\lambda}[\rho]\,\,\sigma_p[\rho],
  \end{equation}
 where $1\le p\le \infty$ and $\lambda>\frac 1{1+p}$, and the symbols $\sigma_p[\rho]$ and $\phi_{p,\lambda}[\rho]$ denote the typical deviation of order $p$ and the Fisher information of order ($p,\lambda$), respectively, previously defined. Moreover, the constant $\mathcal K_{CR}(p,\lambda)$ is given by
  \begin{equation}
  \label{kcr0}
 \mathcal K_{CR}(p,\lambda)=
 \frac{1}{\phi_{p,\lambda}[G]\, \,\sigma_{p}[G] }
 \end{equation}
 where the $\phi_{p,\lambda}[G]$ and $\sigma_{p}[G]$ denote the values of the ($p,\lambda$)\textit{th}-Fisher information and the $p$\textit{th}-order typical deviation of the generalized Gaussian density $G(x)\equiv G_{p,\lambda}(x)$ defined as \cite{lutwak} 
 \begin{equation}\label{eq:genGauss}
  G(x)=a_{p,\lambda}\, e_\lambda(|x|^p)^{-1}
 \end{equation}
for $p \in [0,\infty]$ and $\lambda >1-p$. The symbol $e_\lambda(x)$ denotes the modified $\lambda$-exponential function:
\begin{equation}
\label{explambda}
e_\lambda(x)=(1+(1-\lambda)x)_ +^{\frac1{1-\lambda}},
\end{equation}
where the notation $t_{+} = \max\{t,0\}$ for any real $t$ has been used. Note that for $\lambda \to 1$ it reduces to the standard exponential one, $e_{1}(x)\equiv e^{x}$. Moreover, the normalization constant $a_{p,\lambda}$ has the value 
 \[
a_{p,\lambda} = \left\{\begin{array}{ll}
   \frac{p(1-\lambda)^{1/p}}{2B\left(\frac{1}{p},\frac{1}{1-\lambda}-\frac{1}{p}\right)} & \text{if}\,\,\lambda < 1,\\
   \frac{p}{2\Gamma(1/p)} & \text{if} \,\, \lambda =1,\\
     \frac{p(\lambda-1)^{1/p}}{2B\left(\frac{1}{p},\frac{\lambda}{\lambda-1}\right)} & \text{if} \,\, \lambda >1,
   	\end{array}\right.
 \]
where the symbol $B(a,b)=\frac{\Gamma(a)\Gamma(b)}{\Gamma(a+b)}$ denotes the known Beta function \cite{olver} and an errata has been corrected for the $(\lambda >1)$-case: it is not $B\left(\frac{1}{p},\frac{1}{1-\lambda}\right)$ as in \cite{lutwak}, but $B\left(\frac{1}{p},\frac{\lambda}{\lambda-1}\right)$ . Note that the properties of the generalized Gaussian density are carefully detailed in Sect. II-E of \cite{lutwak}; other, more recent, expressions of this generalized density function and their corresponding properties have been shown (see e.g., \cite{oikonomou, baris}). \\

On the other hand, the constant values $\phi_{p,\lambda}[G]$ and $\sigma_{p}[G]$ are given by
  \begin{equation}
  \label{eq:fishG}
  \phi_{p,\lambda}[G]=\left\{\begin{array}{cc}
  p^\frac 1\lambda a_{p,\lambda}^{\frac{\lambda-1}{\lambda}} (p \lambda +\lambda -1)^{-\frac{(1-\frac 1p)}{\lambda}},& p<\infty \\  2^{(1-\lambda)/\lambda}\lambda^{\frac{-1}\lambda},& p \to \infty
  \end{array}\right.
  \end{equation}
 and
  \begin{equation}
  \label{eq:muG}
  \sigma_{p}[G]=\left\{\begin{array}{cc}
  (p \lambda+\lambda-1)^{-1/p},&  p\in (0,\infty),\, \lambda>\frac1{1+p} \\
  e^{\frac{\lambda}{1-\lambda}}, & p=0, \, \lambda>1\\
  1 & p \to \infty\, ,
  \end{array} \right.
  \end{equation} 
  respectively. Note that the case ($p=2, \lambda=1$) corresponds to the basic Crámer-Rao measure $C_{CR}[\rho]$ given by (\ref{Cramer_rao}). From its definition (\ref{bcr}), we observe that the biparametric Crámer-Rao or Fisher-Heisenberg complexity quantifies the combined balance of a fluctuation aspect of the density (as given by the generalized Fisher information which depends on the parameters $p$ and $\lambda$; this aspect is the gradient content in the particular case $p = 2, \lambda = 1$) and a dispersion facet of the probability concentration with respect to the centroid (as given by the central moment of order $p$; this aspect is the variance of the density in the particular case $p=2$).
\\
  
The biparametric ($p,\lambda$)-Heisenberg-Rényi complexity is defined as
\begin{equation}
\label{bhr}
 C_{HR}^{(p,\lambda)}[\rho]=\mathcal{K}_{HR}(p,\lambda)\,\,\frac{\sigma_p[\rho]}{N_\lambda[\rho]}
\end{equation}
 where $ \lambda\not=1 $, $0\le p\le\infty$ and $\lambda>\frac 1{1+p}$, and the symbols $\sigma_p[\rho]$ and $N_\lambda[\rho]$ denote the $pth$-typical deviation (\ref{centmom}) and the Rényi entropic power, respectively, previously defined. Moreover, the constant $\mathcal{K}_{HR}(p,\lambda)$ has the value
\begin{equation}
\label{khr}
\mathcal{K}_{HR}(p,\lambda)=\frac{N_\lambda[G]}{\sigma_p[G]},
\end{equation}
where the symbol  $N_\lambda[G]$ denotes the Rényi entropic power of the generalized Gaussian density \cite{lutwak} is given by
\begin{equation}
  N_{\lambda}[G]= \Big(a_{p,\lambda}\,e_\lambda\left(\frac{-1}{p\lambda}\right)\Big)^{-1} 
\end{equation}
and the symbols $a_{p,\lambda}$ and $\sigma_p[G]$ have been previously given.\\


We realize from (\ref{bhr}) that the biparametric $(p,\lambda)$th-Heisenberg-Rényi complexity quantifies the combined balance of a dispersion aspect of the probability concentration with respect to the centroid (as given by $p$th-typical deviation $\sigma_p[G]$, which is the standard deviation of the density in the particular case $p=2$) and the global spreading of the density (as given by the Rényi entropic power of order $\lambda$, which boils down to the Shannon entropic power in the particular case $\lambda \to 1$).\\

These two biparametric statistical complexities turns out to be invariant under scaling and translation transformations and lower-bounded by unity, as implicitly shown in \cite{lutwak}; moreover, the equality to unity occurs at the generalized Gaussian densities given by (\ref{eq:genGauss}).\\

To get a further insight into the type of densities $G_{p,\lambda}(x)$ which minimize the two previous families of extended complexities, we have indicated in Fig. \ref{fig:Gaussian_plane} the kind of relevant distributions which correspond to a large set of values for the parameters $(p, \lambda)$. Let us only mention the standard Gaussian distribution, the exponential, the $q$-exponential, the linear, the Cauchy, the logarithmic and the ladder distributions which are particular cases of generalized Gaussian distributions with $(p, \lambda) = (2,1), (1,1), (1,q), (1,2)$, $(0,2), (2,0)$ and $(\infty,\lambda)$, respectively. More important is to note that the behavior of the tail of the distribution is closely related to $\lambda$, so that the minimizer distribution of the complexities correspond to a compact-support distribution, a light-tailed distribution (i.e., one with infinite support and all its moments finite) and a heavy-tailed distribution for the cases $\lambda>1$, $\lambda=1$ and $\lambda<1$, respectively. Thus, since Gaussianity occurs for minimal complexities, the two novel measures of complexity provide a relevant information about the relative behavior of different regions of the distribution. This is illustrated elsewhere for some specific quantum systems of Coulombic and harmonic character. Here we show in the next section the usefulness of these measures of complexity by evaluating them for the generalized Planck distribution which governs the distribution of radiation frequencies of a blackbody at temperature $T$ in a universe of arbitrary dimensionality.

\begin{figure}[H]
  \centering
 \includegraphics[scale=.33]{./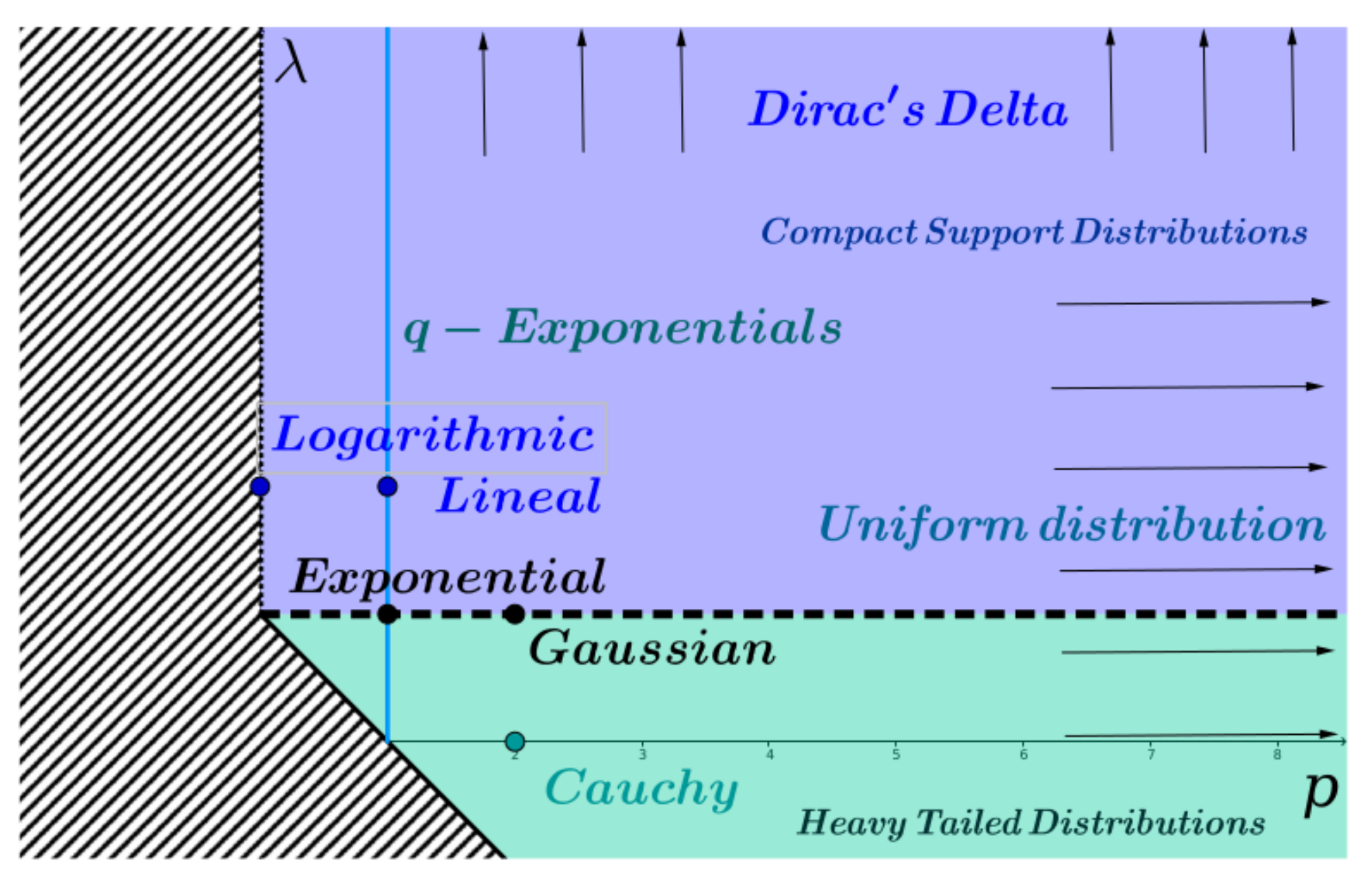}
\caption{The Gaussian $(p,\lambda)$ plane. }
\label{fig:Gaussian_plane}
\end{figure}

\section{Application to the generalized Planck radiation law}

In this section we extend the information-theoretic study of a $d$-dimensional ($d > 1$) blackbody at temperature $T$, initiated last year \cite{Tor}, by calculating the measures of dispersion (typical deviations of order $p$) beyond the standard one (i.e., that with $p=2$), the spreading quantities given by the Rényi entropic powers (which include the Shannon entropic power as a particular case), the generalized Fisher informations (which includes the standard Fisher information as a particular case) and the two biparametric complexity measures introduced in the previous section (which generalize the three basic measures of complexity mentioned above) of its 
spectral energy density  $\rho_{T}^{(d)}(\nu)$ (i.e., the energy per frequency and volume units contained in the frequency interval ($\nu$, $\nu + d\nu$) inside a $d$-dimensional enclosure maintained at temperature $T$), which is given  by the (normalized-to-unity) generalized Planck radiation law \cite{cardoso,ramos1} (see also \cite{Tor})
\begin{equation}
 \label{densidaduno}
\rho_{T}^{(d)}(\nu) = \frac{1}{\Gamma(d+1)\zeta(d+1) }\left(\frac{h}{k_{B}T}\right)^{d+1}\frac{\nu^{d}}{e^{\frac{h \nu }{k_{B} T}}-1}, 
\end{equation}
where $h$ and $k_B$ are the Planck and Boltzmann constants, respectively, and $\Gamma(x)$ and $\zeta(x)$ denote the Euler's gamma function and the Riemann's zeta function\cite{olver}, respectively.

\subsection{Typical deviations} 
Let us first determine the typical deviations $\sigma_p[\rho_{T}^{(d)}]$ of the $d$-dimensional blackbody density $\rho_{T}^{(d)}(\nu)$ defined as
\begin{equation}
 \label{eq:moments0}
 \sigma_p[\rho_{T}^{(d)}]^p = A\int_0^\infty \left|\nu-\langle\nu\rangle\right|^p\frac{\nu^d}{e^{a\nu}-1}\,d\nu,
 \end{equation}
with the notation
\[
A=\frac{1}{\Gamma(d+1)\,\zeta(d+1)}a^{d+1}, \quad \quad a=\frac h {k_BT}.
\]
Since the centroid of the density has the value
\[
\langle \nu\rangle=(d+1)\frac{\zeta(d+2)}{\zeta(d+1)}\frac 1 a \equiv \frac ba,
\]
we obtain that the typical deviation of even order $p$ of the blackbody depends on temperature $T$ as
\begin{equation}
 \label{eq:moments}
 \sigma_p[\rho_{T}^{(d)}] = (A_H(p,d))^\frac1p\, \frac{k_BT}{h},
 \end{equation}
where the proportionality constant is given by
 \begin{eqnarray}
 \label{eq:ctemoments}
 A_H(p,d) &=& \sum_{n=0}^p (-1)^{p-n}{ p \choose n}\left((d+1)\frac{\zeta(d+2)}{\zeta(d+1)}\right)^{p-n} 
          \frac{\Gamma(d+n+1)}{\Gamma(d+1)}\, \frac{\zeta(d+n+1)}{\zeta(d+1)}\nonumber\\
          & & \equiv  \sum_{n=0}^p \gamma_n(p,d)\, \zeta(d+n+1),  
\end{eqnarray}
which only depends on the space dimensionality $d$.  We observe that all $p$-typical deviations follow a Wien-like law, in the sense that they are directly proportional to the temperature of the system. In Fig. \ref{fig:3} we plot the $p$-dependence of $\sigma_p[\rho_{T}^{(d)}] \frac{h}{k_BT}$ for various dimensionalities of the universe, finding a linearly increasing behavior when $p$ is augmenting. Moreover, we note that the increasing of the space dimensionality provokes a larger dispersion of the radiation frequencies with respect to the middle value. \\

\begin{figure}[H]
  \centering
 \includegraphics[scale=.7]{./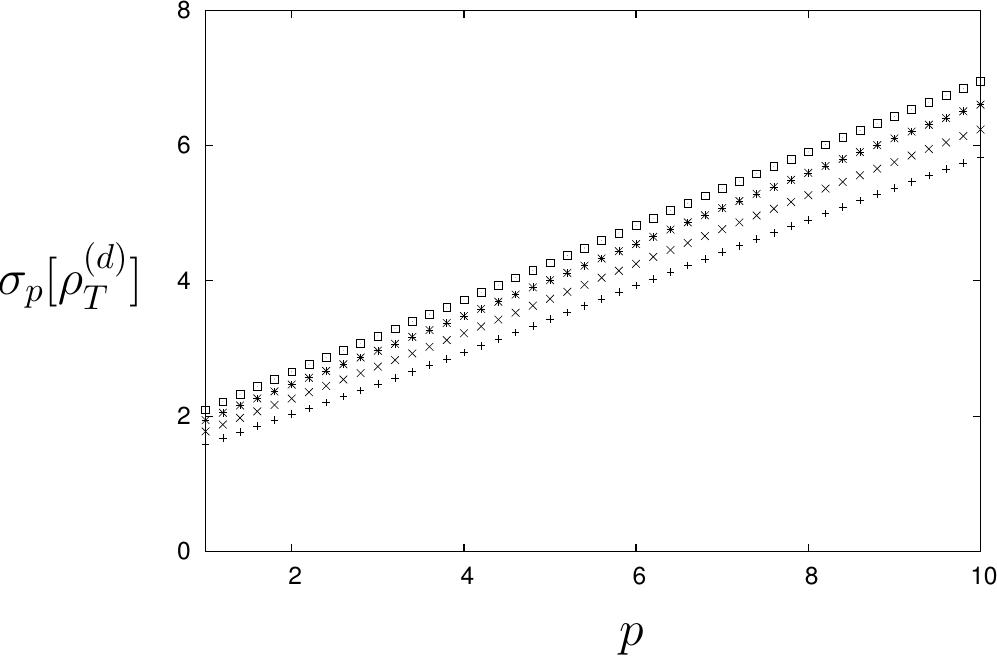}
\caption{Dependence of the p\textit{th}-typical deviation, $\sigma_{p}[\rho_T^{(d)}]$ in $\frac{h}{k_BT}$-units, on the parameter $p$  for the universe dimensionalities $ d=3(+),4(\times),5(*),6(\square)$.}
  \label{fig:3}
\end{figure}

\subsection{Rényi entropies}
 Let us now calculate the Rényi entropic power $N_\lambda[\rho_{T}^{(d)})] = e^{R_{\lambda}[\rho_{T}^{(d)}]}$, given by (\ref{eq:renyilength}), of the multidimensional blackbody density (\ref{densidaduno}) at temperature $T$, where the $\lambda$-Rényi entropy is given by
\begin{equation}
R_{\lambda}[\rho_{T}^{(d)}]=\frac{1}{1-\lambda}\ln\left(\int_{\Delta}[\rho_{T}^{(d)}]^{\lambda}d\nu\right);\quad \lambda>0,\, \lambda\neq1.
\label{eq:entropRen2}
\end{equation}
Taking into account Eqs. (\ref{eq:renyilength}) and (\ref{eq:entropRen}) and the corollary of the Lemma proved in Appendix \ref{Lemma:app}, we obtain that

\begin{equation}
\label{eq:power_renyi} 
N_{\lambda}[\rho_{T}^{(d)}]=A_R(\lambda,d)^\frac1{1-\lambda}\,\frac{k _BT}{h}, 
\end{equation}
with $ \lambda >0$, $\lambda \not=1$, and the proportionality constant 
\begin{equation}
\label{eq:AR}
A_R(\lambda,d)=\frac{\Gamma (\lambda d+1)\,\zeta_\lambda(\lambda d+1,\lambda)}{\left[\Gamma(d+1)\, \zeta(d+1)\right]^\lambda },
\end{equation}
where the symbol $\zeta_{n}(s,a) \equiv \zeta_n(s,a|1,...,1)$ denotes the modified Riemann zeta function or Barnes zeta function \cite{barnes,choi}, defined for $n \in \mathbb{N}$, which for $a \not= 0,-1,-2,...$ is known to have the integral representation, when $\text{Re}(s)>n$: 
\begin{eqnarray*}
\zeta_n(s,a)&=&\frac 1{\Gamma(s)}\int_0^\infty \frac{x^{s-1}e^{(n-a)x}}{(e^x-1)^n}\,dx\\
&=&\sum_{j=0}^{n-1}q_{n,j}(a)\,\zeta(s-j,a),	
\end{eqnarray*}

with the coefficients \cite{choi}
\begin{equation}
\label{choicoeff}  
q_{n,j}(a)=\frac 1{(n-1)!} \sum_{l=j}^ {n-1}(-1)^ {n+l-1}{l\choose j}S_{n-1}^{(l)}(1-a)^{l-j},
\end{equation}
where $S_n^{(l)}$ are the well-known Stirling's numbers of the first kind. The symbol $\zeta(s,a)$ denotes the known Hurtwitz's zeta function \cite{olver} so that for $a=1$ it boils down to the standard Riemann's zeta function $\zeta(s)$. Furthermore, it is shown in Appendix \ref{Lemma:app} that  the Barnes' zeta function can be expressed as
\begin{eqnarray*}
\zeta_n(s,a)&=&\sum_{j=0}^{n-1}q_{n,j}(a)\,\zeta(s-j,a-1)=\sum_{j=0}^{n-1}q_{n,j}(a)\,\zeta(s-j),	
\end{eqnarray*}
as far as $a\in\mathbb{N}$. In general the Barnes function is also known as the multiple (or $n$th-order) Hurwitz zeta function given by
\[
 \zeta_n(s,a|\omega_1,...,\omega_n)=\sum_{k_1,...,k_n=0}^\infty \frac1{(\Omega + a)^s},
\]
with $\text{Re}(s)>n;\,\, n \in \mathbb{N}$ and where  $\Omega=k_1\omega_1+...+k_n\omega_n$. This function was first introduced by Barnes in 1899 \cite{barnes} (who also gives the general conditions to be fulfilled by the paramaters $a$ and $\omega_i, i \in \mathbb{N}$; see also \cite{choi}) in his study of the multiple (or $n$th-order) gamma functions, whose physico-mathematical relevance was discovered in 1980 on the study about the determinants of the Laplacians on the $n$-dimensional unit sphere. \\


Note from (\ref{eq:power_renyi}) that the $\lambda$th-Rényi entropic power, which has units of frequency, follows a Wien's like displacement law in the sense that it linearly depends on the blackbody temperature. In Fig. \ref{fig:renyi} we study the behavior of the $\lambda$th-Rényi entropic power, $N_\lambda[\rho_{T}^{(d)}] \frac{h}{k_BT}$, as a function of the parameter $\lambda$  for various universe dimensionalities $d=3-6$. Briefly, we observe that (i) it monotonically decreases when $\lambda$ is increasing for all dimensionalities, and (ii) it increases when $d$ is increasing for all values of $\lambda$.
\begin{figure}[H]
  \centering
 \includegraphics[scale=.8]{./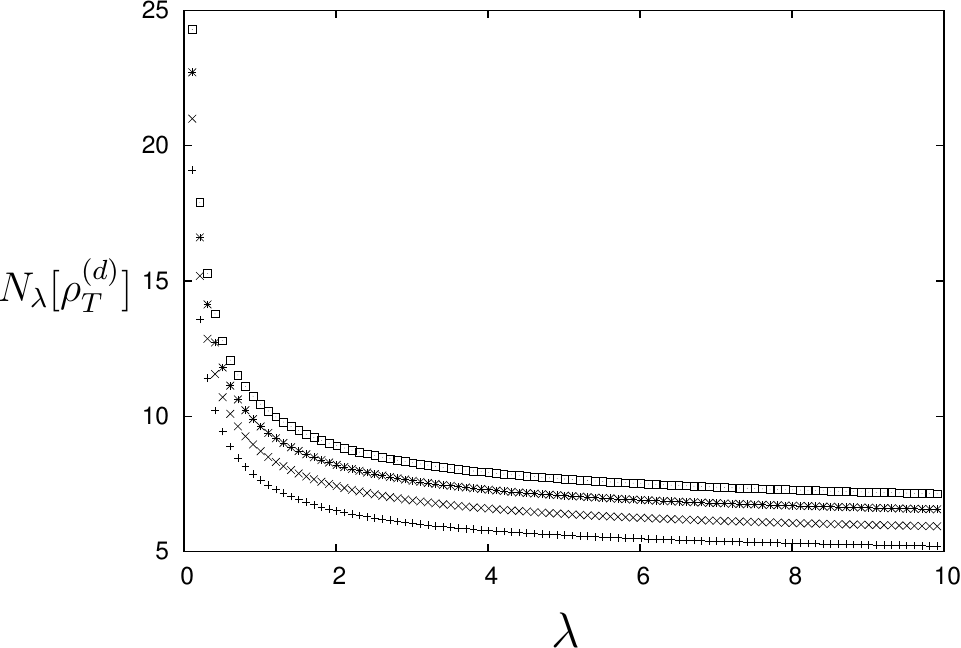}
\caption{Dependence of $\lambda$th-Rényi entropic power, $\sigma_{p}[\rho_T^{(d)}]$ in $\frac{h}{k_BT}$-units, on the parameter $\lambda$ for the universe dimensionalities $ d=3(+),4(\times),5(*),6(\square)$.}
  \label{fig:renyi}
\end{figure}

\subsection{Biparametric Fisher information}
 Now we calculate the $(p,\lambda)$th-order Fisher information of the blackbody density $\rho_{T}^{(d)}$ given by
\begin{equation}
\phi_{p,\lambda}[\rho_{T}^{(d)}] = \left(\int_{\Lambda} \left|[\rho_{T}^{(d)}]^{\lambda-2}\left(\rho_{T}^{(d)}\right)'\right|^{q}\rho_{T}^{(d)}\, d\nu\right)^{\frac{1}{q\lambda}},	
\end{equation}
where $p\in[1,\infty)$, $\frac{1}{p}+\frac{1}{q} =1$ and $\lambda > 0$. Operating similarly as before, we obtain that the biparametric Fisher information $\phi_{p,\lambda}[\rho_{T}^{(d)}]$  of the $d$-dimensional blackbody density (\ref{densidaduno}) at temperature $T$ can be expressed as
\begin{equation}
\label{eq:genfish0}
\phi_{p,\lambda}[\rho_{T}^{(d)}] = \left[A_F(p,\lambda,d)\right]^{\frac{1}{q\lambda}}\frac{h}{k_BT}, \quad \forall q \in (1,\infty), \,\,\forall \lambda >0 	
\end{equation}
with the proportionality constant 
\begin{equation}
\label{eq:cteAF}
A_F(p,\lambda,d)=\frac{I(d,q,\lambda)}{(\Gamma(d+1)\zeta(d+1))^{q\lambda -q+1}},
\end{equation}
where the symbol $I(p,\lambda,d)$ denotes the integral
\begin{equation}
\label{eq:integralI}
I(p,\lambda,d)=\int_{\mathbb{R^+}}\frac{x^{q(d\lambda -d-1)+d}}{(e^x-1)^{q\lambda +1}}\left|d(e^x-1)-xe^x\right|^q\, dx,
\end{equation}
so that for even $q$ and  $q\lambda \in \mathbb{N}$, (\ref{eq:genfish0}) one has the value
\begin{align*}
I(p,\lambda,d)&=\sum_{i=0}^q(-1)^{q+i}{q\choose i}d ^i\int_0^\infty e^{(q-i)x}\frac{x^{\alpha d-i}}{(e^x-1)^{1+q\lambda -i}}dx\nonumber \\
& = \sum_{i=0}^q(-1)^{q-i}{q\choose i}d ^i(\alpha d -i)!\zeta_{\alpha+q-i}(1+\alpha d -i ,\alpha) 
\end{align*}
with  $\alpha \equiv q\lambda -q+1$. Summarizing, we have obtained that the biparametric Fisher information, $\phi_{p,\lambda}[\rho_{T}^{(d)}]$, follows the law  (\ref{eq:genfish0}) with the proportionality constant 
\begin{equation}
\label{eq:ctefish}
A_F(p,\lambda,d) = \frac{\sum_{i=0}^q(-1)^{q-i}{q\choose i}d ^i(\alpha d -i)!\,\zeta_{\alpha+q-i}(1+\alpha d -i ,\alpha)}{[\Gamma(d+1)\,\zeta(d+1)]^{\alpha}}
\end{equation}
for even $q$ and $q\lambda \in \mathbb{N}$. Note that in the particular, standard case $\lambda=1, q=2$, one has that
\begin{equation}
A_F(2,1,d)=\frac{1}{2\zeta(d+1)}\left(\zeta(d)-\frac{d-3}{d-1}\zeta(d-1) \right),
\end{equation}
for $d>2$. Moreover, a convergence analysis of the definition (\ref{eq:genfish}) allows one to show that the generalized Fisher information $\phi_{p,\lambda}[\rho_{T}^{(d)}]$ given by (\ref{eq:genfish0}) is well-defined if and only if $\lambda p>d^*=\frac d{d-1}$ (which includes the condition $\lambda>\frac{1}{1+p}$, necessary to have finite typical deviations).\\

In Fig. \ref{fig:fisher} we plot a colour tridimensional map of biparametric Fisher information against its parameters $(q,\lambda)$, and the \textit{conjugated} representation with respect to the parameters $(p,\lambda)$ when $d=6$. Therein we observe that biparametric Fisher information has a non-trivial behaviour with an absolute minimum valley. Similar maps can be obtained for other dimensionalities. To gain more insight into it, we make two cuts in the left colour map at $p=2$ and $\lambda=2$ obtaining the two graphs (b) at the below of the figure which show a different behavior for the corresponding generalized Fisher information. In the left graph with $\lambda>1/3$ a minimum shows up at $\lambda_{min}$ for all dimensionalities $d=3-6$. In the right graph with $p>1$ we observe a monotonically decreasing behavior with respect to the parameter $p$ for all dimensionalities $d=3-6$.
\begin{figure}[H]
  \subfloat[]{%
  \begin{minipage}{\linewidth}
 \includegraphics[width=.5\linewidth]{./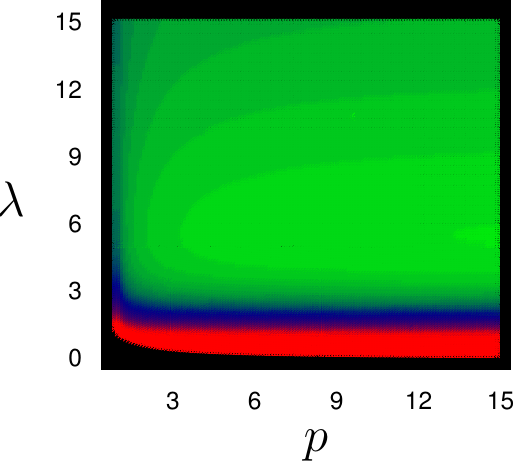}\hfill
   \includegraphics[width=.48\linewidth]{./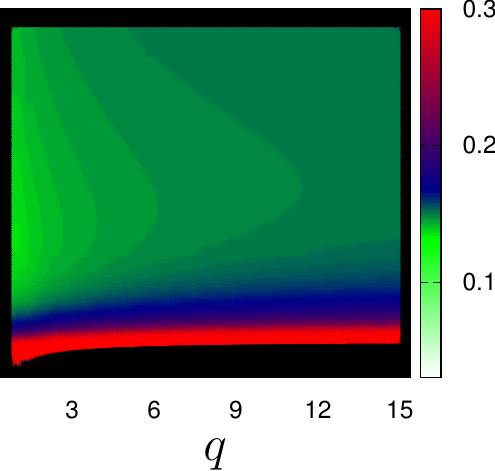}
  \end{minipage}%
  }\par
     \subfloat[]{%
      \begin{minipage}{\linewidth}
     \includegraphics[width=.5\linewidth]{./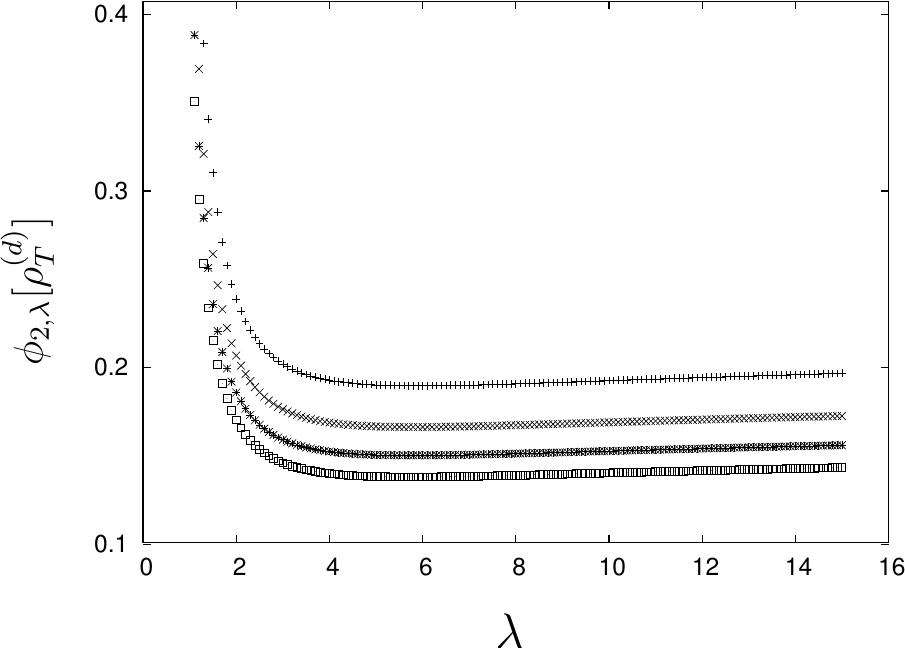}\hfill
       \includegraphics[width=.5\linewidth]{./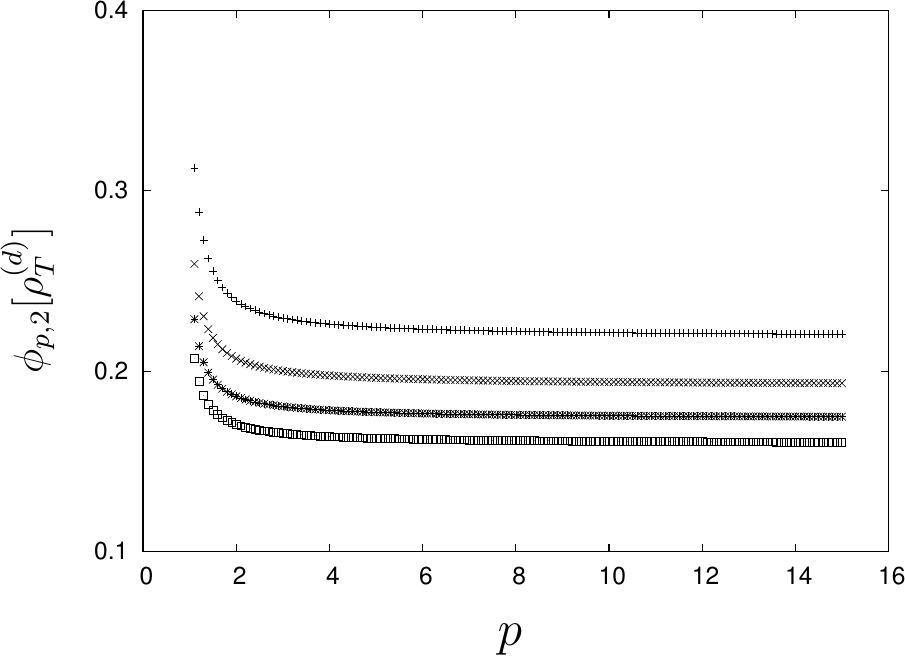}
      \end{minipage}%
      }\par
  \caption{Above: Colour maps of the generalized Fisher information $\phi_{p,\lambda}[\rho_{T}^{(d)}]$ in $\frac{k _BT}{h}$-units against the parameters $(p,\lambda)$ and $(q,\lambda)$ respectively, when $d=6$. Below left: the generalized Fisher information $\phi_{2,\lambda}[\rho_{T}^{(d)}]$ in terms of $\lambda$ for $d=3-6$. Below right: the generalized Fisher information $\phi_{p,2}[\rho_{T}^{(d)}]$ in terms of $p$ for $d=3-6$. In the last two graphs the upper (lower) curve corresponds to the case $d=3\, (d=6).$}
  \label{fig:fisher}
  \end{figure}

\subsection{Biparametric complexity measures}

\subsubsection{Biparametric Crámer-Rao complexity}
Let us now calculate the generalized Crámer-Rao complexity $C_{FR}^{(p,\lambda)}[\rho_{T}^{(d)}]$ of the $d$-dimensional blackbody density at temperature $T$ which, according to (\ref{bcr}), is given by
\begin{equation}
\label{eq:comp_CR1}
C_{CR}^{(p,\lambda)}[\rho_{T}^{(d)}]=\mathcal K_{CR}(p,\lambda)\,\,\phi_{p,\lambda}[\rho_{T}^{(d)}] \,\,  \sigma_{p}[\rho_{T}^{(d)}],\nonumber\\
\end{equation}
where the constant $\mathcal K_{CR}(p,\lambda)$ is given in Eq. (\ref{kcr0}). Taking into account the values of $\phi_{p,\lambda}[\rho_{T}^{(d)}]$ and $\sigma_{p}[\rho_{T}^{(d)}]$ given by Eqs. (\ref{eq:genfish0}) and (\ref{eq:moments}), respectively, we obtain that the complexity measure $C_{FR}^{(p,\lambda)}[\rho_{T}^{(d)}]$ can be expressed as
\begin{eqnarray}
\label{eq:comp_CR2}
C_{CR}^{(p,\lambda)}[\rho_{T}^{(d)}]&=&
\mathcal K_{CR}(p,\lambda)(\Gamma(d+1)\zeta(d+1))^{-\alpha}\nonumber \\
& &\hspace{-1cm}\times \Big[\sum_{i=0}^q(-1)^{q-i}{q\choose i}d ^i(\alpha d -i)!\zeta_{q\lambda+1-i}(1+\alpha d -i ,1+q(\lambda-1))\Big]^{\frac{1}{q\lambda}} \left(\sum_{n=0}^{p}\gamma_n(d,p)\,\,\zeta(d+n+1)\right)^{\frac{1}{p}},
\end{eqnarray}
for even $q$, $q \lambda \in\mathbb{N}$ and where the symbol $\gamma_n(d,p)$ is defined by (\ref{eq:ctemoments}) and $\zeta_m (x,y)$ is the Barnes zeta function mentioned above. Most important is to note that this complexity quantifier depends only on the parameters $(p,\lambda)$ and the dimensionality of the universe $d$. \\
In Fig. \ref{fig:cramerrao}, we plot a colour tridimensional map of $C_{CR}^{(p,\lambda)}[\rho_{T}^{(d)}] \equiv C_{CR}^{(p,\lambda)}(d)$ against the parameters $(p,\lambda)$, and the \textit{conjugated} representation with respect to the parameters $(p,\lambda)$ when $d=6$. We observe that this complexity measure captures a non-trivial structure with an absolute minimum valley. Similar maps can be obtained for other dimensionalities. For completeness let us point out that when $d=3$, the absolute minimum is located at $(p\simeq1.91,\lambda\simeq1.55)$, for which the complexity $C_{CR}^{(1.91,1.55)}\simeq1.29$.
This illustrates to what extent the Crámer-Rao complexity captures such an structure even for distributions so well behaved as the generalized Planck distribution law. This suggests that this complexity quantifier must be a powerful tool for the information-theoretical analysis of much more complex physical laws.\\ 
 
To get a further insight into this complexity map $C_{CR}^{(p,\lambda)}(d)$ we make two cuts at $p=2$ and $\lambda=2$ for various dimensionalities $d=3,4,5,6$ as it is shown in the two below graphs of the figure. In the below-left graph we plot the complexity $C_{CR}^{(3,\lambda)}(d)$ in terms of $\lambda$ for $\lambda>1/3$, finding the existence of a value $\lambda_{min}$ which minimizes this measure; as well, we observe that it tends toward a constant value at large values of $\lambda$. In the below-right graph we plot the complexity $C_{CR}^{(p,2)}(d)$ in terms of $p$, we also find a minimum but, opposite to the previous case, the asymptotic $p$-behavior is clearly divergent.

\begin{figure}[H]
  \subfloat[]{%
  \begin{minipage}{\linewidth}
 \includegraphics[width=.5\linewidth]{./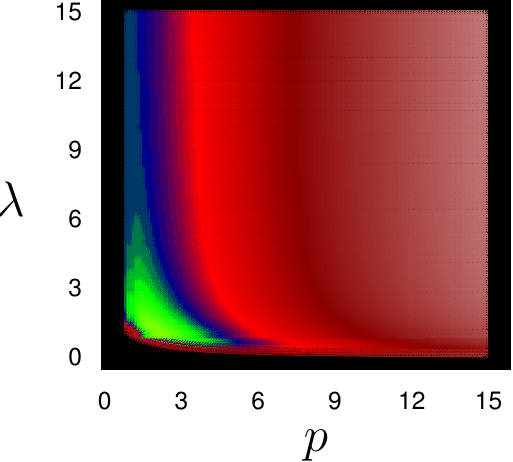}\hfill
   \includegraphics[width=.47\linewidth]{./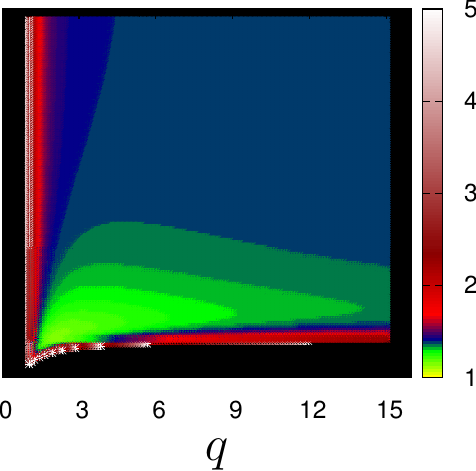}
  \end{minipage}%
  }\par
    \subfloat[]{%
      \begin{minipage}{\linewidth}
     \includegraphics[width=.5\linewidth]{./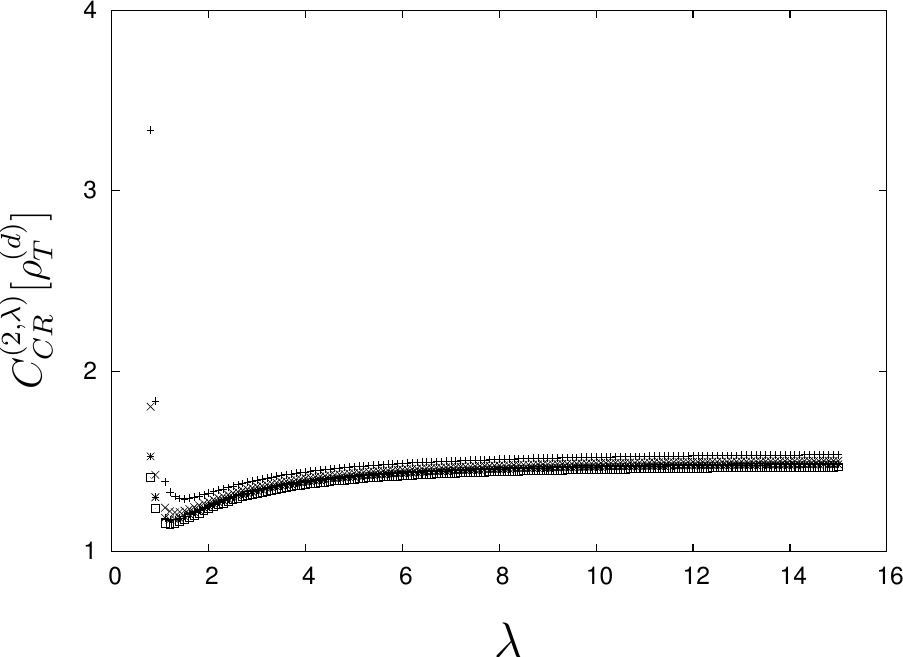}\hfill
       \includegraphics[width=.5\linewidth]{./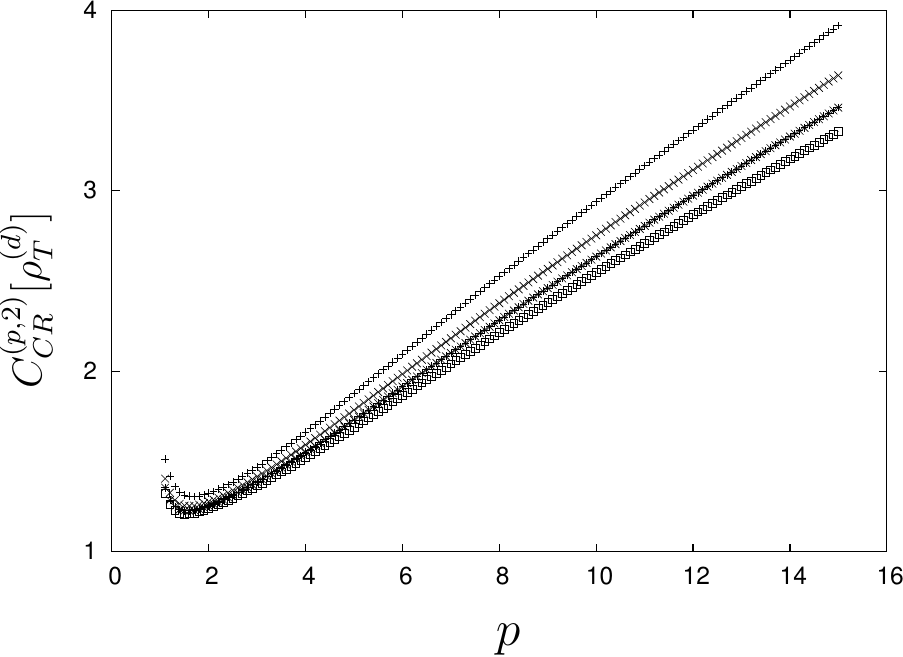}
      \end{minipage}%
      }\par
    \caption{Above: Colour map of the Crámer-Rao complexity $C_{CR}^{(p,\lambda)}(d)$ against the parameters $(p,\lambda)$ and $(q,\lambda)$ when $d=6$. Below left: Dependence of the Crámer-Rao complexity $C_{CR}^{(2,\lambda)}(d)$ on $\lambda$ when $d=6$. Below right: Dependence of the Crámer-Rao complexity $C_{CR}^{(p,2)}(d)$ on $p$ when $d=3-6$. In the last two graphs the upper (lower) curve corresponds to the case $d=3\, (d=6).$}
  \label{fig:cramerrao}
  \end{figure}

\subsubsection{Biparametric Heisenberg-Rényi complexity}
The biparametric Heisenberg-Rényi $C_{HR}^{(p,\lambda)}[\rho_{T}^{(d)}]$ of the $d$-dimensional blackbody density $\rho_{T}^{(d)}$ can be written, according to (\ref{bhr}), as
 \[
 C_{HR}^{(p,\lambda)}[\rho_{T}^{(d)}]= \mathcal K_{HR}(p,\lambda)\frac{\sigma_p[\rho_{T}^{(d)}]}{N_\lambda[\rho_{T}^{(d)}]}
 \]
 with the constant  $\mathcal{K}_{HR}(p,\lambda)$ given by (\ref{khr})
 Moreover, in the general case $\lambda\not=1$ this constant is
 \[
 \mathcal{K}_{HR}(p,\lambda)=(p\lambda+\lambda-1)^{\frac{q\lambda-\lambda+1}{q\lambda-q}} (p\lambda)^{\frac{1}{1-\lambda}} a_{p,\lambda}^{-1},
 \]
 so that the corresponding expression for the complexity measure $C_{HR}^{(p,\lambda)}[\rho_{T}^{(d)}]$ is
 \begin{equation}
 C_{HR}^{(p,\lambda)}[\rho_{T}^{(d)}]=\mathcal{K}_{HR}(p,\lambda)\left(\frac{\Gamma (\lambda d+1)\zeta_\lambda(\lambda d+1,\lambda)}{\Gamma^\lambda (d+1) \zeta^\lambda(d+1)}\right)^{\frac {1}{\lambda-1}} \left(\sum_{n=0}^{p}\gamma_n(d,p)\zeta(d+n+1)\right)^{\frac{1}{p}},
 \end{equation}
with $\lambda \in \mathbb{N}$, $p$ even. Again here, we note that this complexity quantifier depends only on the parameters $(p,\lambda)$ and the dimensionality of the universe $d$. \\

In Fig. \ref{fig:HRs}, a colour tridimensional map of $C_{HR}^{(p,\lambda)}[\rho_{T}^{(d)}] \equiv C_{HR}^{(p,\lambda)}(d)$ is given, which shows the dependence of the Heisenberg-Rényi complexity in terms of the parameter $\lambda$ for different values of the parameter $p$ for the spatial dimensionality $d=6$. We observe that Heisenberg-Rényi complexity measure allows us to capture a non-trivial structure with an absolute minimum. Similar complexity maps can be obtained for other dimensionalities. In particular when $d=3$ this minimum is located at $(p\simeq1.34,\lambda\simeq1.24)$, for which this measure has the value $C_{CR}^{(1.34,1.24)}\simeq1.08$. To better understand this figure at the dimensionalities $d=3,4,5,6$, we make two cuts at $p=3$ and at $\lambda=1$ which give rise to the two below graphs. In the below-left graph we plot the complexity $C_{HR}^{(3,\lambda)}(d)$ in terms of $\lambda$ for $\lambda>1/4$, finding the existence of a $\lambda_{min}$ which minimizes the measure as well as a constant asymptotic trend when $\lambda\to\infty$. In the below-right graph we plot the complexity quantifier $C_{HR}^{(p,1)}(d)$ for $p>0$, finding a minimum value $p_{min}$ which minimizes the complexity, as well as a divergent asymptotic behavior similar to the one previously found for the Crámer-Rao complexity measure.
\begin{figure}[H]
  \subfloat[]{%
  \begin{minipage}{\linewidth}
  \centering
 \includegraphics[width=.52\linewidth]{./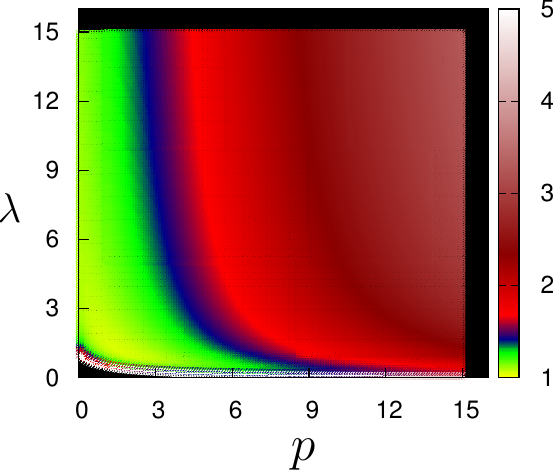}\hfill
  \end{minipage}%

  }\par
   \subfloat[]{%
    \begin{minipage}{\linewidth}
   \includegraphics[width=.5\linewidth]{./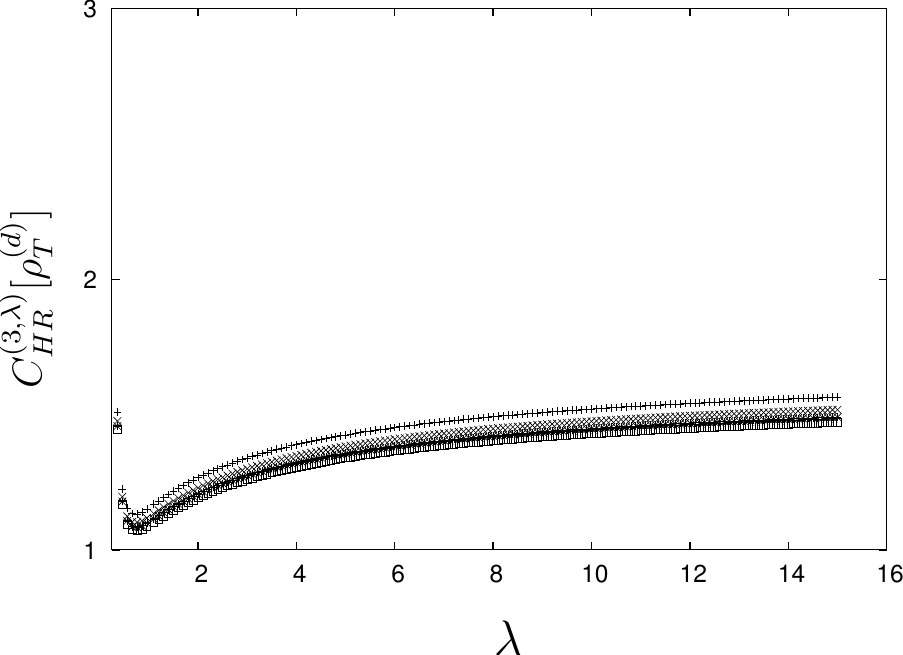}\hfill
     \includegraphics[width=.5\linewidth]{./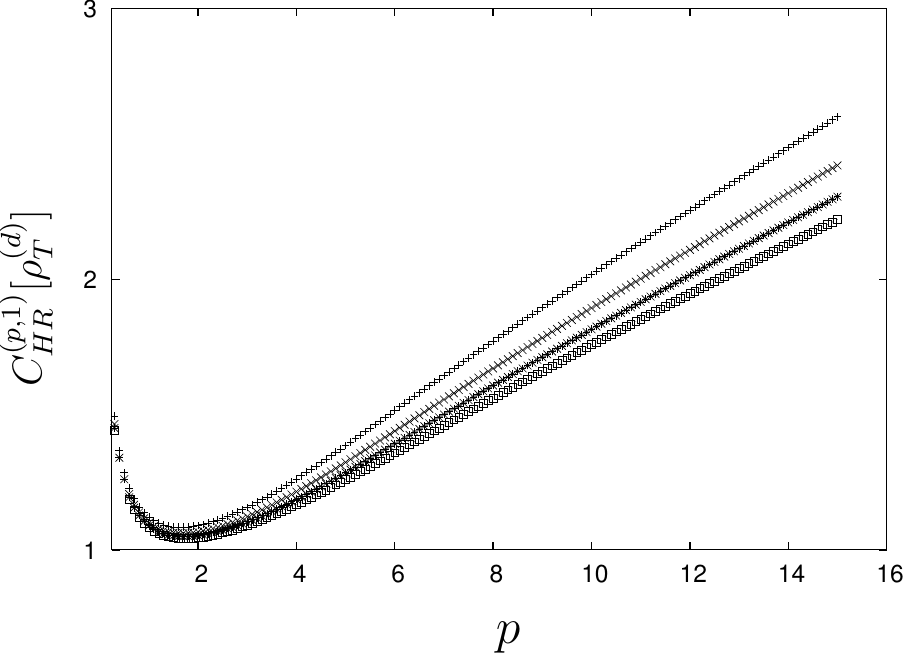}
    \end{minipage}%
    }\par    
    \caption{Above: Colour map of the Heisenberg-Rényi complexity $C_{HR}^{(p,\lambda)}[\rho_{T}^{(d)}] \equiv C_{HR}^{(p,\lambda)}(d)$ against the parameters $(p,\lambda)$ when $d=6$. Below left: Dependence of the Heisenberg-Rényi complexity $C_{HR}^{(3,\lambda)}(d)$ on $\lambda$ when $d=3-6$. Below right: Dependence of the Heisenberg-Rényi complexity $C_{HR}^{(p,1)}(d)$ on $p$ when $d=3-6$. In the last two graphs the upper (lower) curve corresponds to the case $d=3\, (d=6).$}

    \label{fig:HRs}
  \end{figure}
  
  

Finally, for completeness, in Fig. \ref{fig:comparative} the generalized Gaussian distributions which minimize the two novel complexity quantifiers of Crámer-Rao (red color) and Heisenberg-Rényi (blue) types introduced in this work are compared with the corresponding Planck distribution law (black) for the dimensionality $d=3$. We observe certain similarities in the left fall of the Crámer-Rao and Planck cases, and in the right fall of the Heisenberg-Rényi and Planck cases.

\begin{figure}[H]
  \centering
 \includegraphics[scale=.8]{./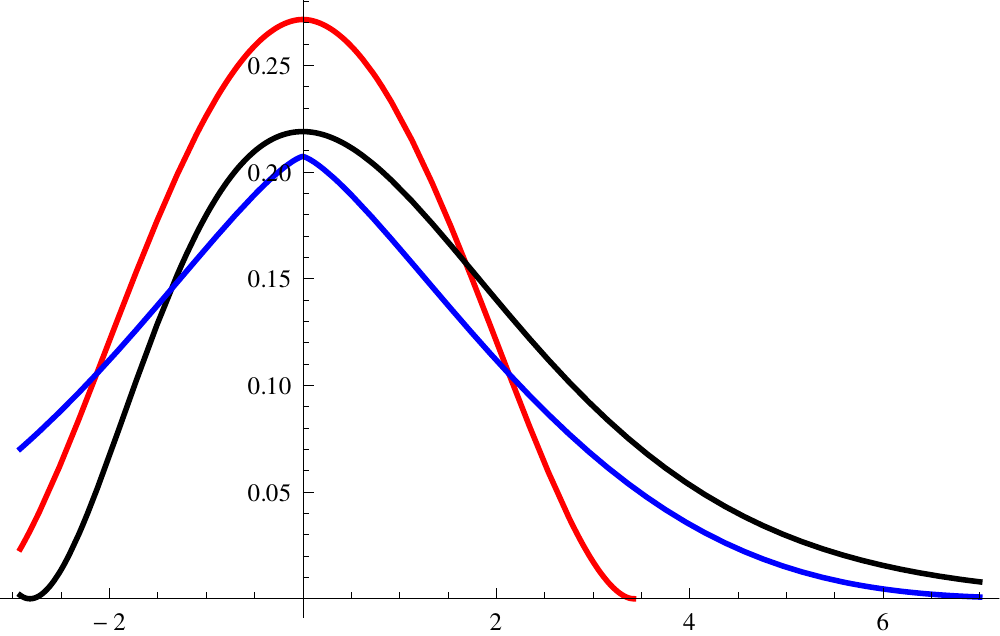}
\caption{Comparison of the generalized Gaussian minimizers of the extended Crámer-Rao (red) and Heisenberg-Rényi (blue) with Plank distribution law (black) when $d=3$.}
\label{fig:comparative}
 \end{figure}
 
\section{Conclusions and open problems}

It is known that we need various measures to take into account the multiple facets of the concept of complexity in a complex many-body system. In this paper we have introduced and discussed two novel biparametric complexity tools of Crámer-Rao and Heisenberg-Rényi types, which extend the three basic  measures of complexity (i.e., Crámer-Rao, Fisher-Shannon and LMC) and some modifications which have been published up until now. Then we have illustrated the usefulness of these two complexity measures by applying and explicitly computing them for a relevant quantum object, the $d$-dimensional blackbody at temperature $T$. We have found that they are universal constants in the sense that they are dimensionless and they do not depend on the temperature nor on any physical constant (such as e.g., Planck constant, speed of light or Boltzmann constant), so that they only depend on the spatial  dimensionality of the universe. The results show the existence of a non trivial underlying mathematical structure, according to which these quantities become minimal for some values of their characteristic parameters.\\

To determine these generalized measures of complexity for the $d$-dimensional blackbody radiation with standard ($d = 3$) and non-standard dimensionalities we needed to calculate various dispersion and entropy-like quantities in terms of dimensionality $d$ and temperature $T$. Indeed, we have determined the typical deviations (that generalize the standard deviation), the Rényi entropy (that generalizes the Shannon entropy and the disequilibrium) and the biparametric Fisher information (which generalizes the standard Fisher information) of the $d$-dimensional Planck density in an analytical way. We have found that these quantities, slightly modified, have a Wien-like temperature behavior similar to the well-known Wien's law followed by the frequency $\nu_{max}$ at which the density is maximum. The values of these characteristic quantities, particularly the ones associated to the biparametric Fisher information, might be of potential interest to grasp the anisotropies of the cosmic microwave background radiation (which yields information about our Universe at around 380 000 years after the Big Bang). Finally, we wonder whether this information-theoretical approach may be used for the (broadly unknown) cosmic neutrino background and the cosmic gravitational background, which would provide hints about our Universe one minute after the Big Bang and during the Big Bang, respectively \cite{Faessler}.


\section*{Acknowledgements} 
This work was partially supported by the Projects
P11-FQM-7276 and FQM-207 of the Junta de Andalucia, and by the MINECO-FEDER (European Regional Development Fund) grants FIS2014-54497P and FIS2014-59311P. I. V. Toranzo acknowledges the support of the Spanish Ministerio de Educación under the program FPU 2014.\\

Statement: All authors contributed equally to the paper.

\newpage
\section{Appendix A}
\appendix
\label{Lemma:app}
Here we explicitly solve the integral functionals needed to determine the Rényi entropies of the $d$-dimensional blackbody in section III.\\
 
\begin{lemma}
Let  $n,m,k \in \mathbb{N}_0, n,k>0$, $n>k\ge m$ y $r,s,p \in \mathbb{R}$, con $r>s, r>p$. Then, the following multiparametric integral has the value
\begin{eqnarray}
\label{eq:intbbd1}
\int_0^\infty \frac{x^ne^{(mr+(k-m)s+p)x}}{(e^{rx}-e^{sx})^{k+1}}\,dx &=& \frac{1}{(r-s)^{n+1}}\frac{n!}{k!} \sum_{i=0}^{k}\sum_{j=0}^{k-i}(-1)^{k+i}{i+j \choose j}S_k^{(i+j)} \left(k-m+\frac{s-p}{r-s}\right)^j \zeta\left(n+1-i,\frac{r-p}{r-s}\right) \nonumber \\
& = &  \frac{1}{(r-s)^{n+1}}n! \sum_{i=0}^{k} q_{k+1,i}\left(k-m+\frac{r-p}{r-s}\right)\, \,\zeta\left(n+1-i,\frac{r-p}{r-s}\right),\nonumber\\
\end{eqnarray}
where the Stirling numbers $S_n^{(l)}$ and the Choi coefficients are related by Eq. (\ref{choicoeff}).
\end{lemma}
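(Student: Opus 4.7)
The strategy is to reduce the integral, via a linear substitution, to the canonical integral representation of the Barnes zeta function, and then to apply the Choi decomposition of Section III together with a finite Hurwitz shift. First I perform the substitution $u=(r-s)x$ (valid since $r>s$), under which $e^{rx}-e^{sx}=e^{sx}(e^{(r-s)x}-1)$ and the integrand collapses to $(r-s)^{-(n+1)}\,u^{n}e^{\gamma u}/(e^{u}-1)^{k+1}$ with $\gamma=m+1-(r-p)/(r-s)$. The hypotheses $n>k$ and $m\le k,\, r>p$ guarantee convergence (behaviour $u^{n-k-1}$ at zero and strictly negative effective exponent at infinity). Pattern-matching with the Barnes zeta integral representation at $s=n+1$ and reading off $a=(k+1)-\gamma=k-m+(r-p)/(r-s)$ then gives
\[
\int_0^\infty \frac{x^n e^{(mr+(k-m)s+p)x}}{(e^{rx}-e^{sx})^{k+1}}\,dx \;=\; \frac{n!}{(r-s)^{n+1}}\,\zeta_{k+1}\!\left(n+1,\,k-m+\frac{r-p}{r-s}\right).
\]

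Applying the Choi decomposition $\zeta_{k+1}(n+1,a)=\sum_{i=0}^{k}q_{k+1,i}(a)\,\zeta(n+1-i,a)$ yields a sum of Hurwitz zetas at argument $a$, whereas the lemma targets the smaller argument $a_0:=(r-p)/(r-s)=a-(k-m)$. I would bridge the two via the finite shift identity $\zeta(s,a)=\zeta(s,a_0)-\sum_{l=0}^{k-m-1}(a_0+l)^{-s}$ inside each summand; the spurious contribution then collects into $-\sum_{l=0}^{k-m-1}(a_0+l)^{-(n+1)}\,P_{k+1}(a_0+l;a)$, where $P_{k+1}(x;a):=\sum_{i=0}^{k}q_{k+1,i}(a)\,x^{i}$. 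The key algebraic identity
\[
P_{k+1}(x;a)\;=\;\frac{1}{k!}\prod_{j=1}^{k}\bigl(x-(a-j)\bigr)
\]
then forces every correction to vanish, because $a_0+l=a-(k-m)+l$ for $l=0,\ldots,k-m-1$ runs through the roots $a-(k-m),\ldots,a-1$. This delivers the second (Choi) form of the lemma. The first (Stirling) form is then obtained by substituting (\ref{choicoeff}) for $q_{k+1,i}(a)$ and reindexing $j=l-i$; the stated factor $(k-m+(s-p)/(r-s))^{j}$ appears as $(-1)^{j}(1-a)^{j}$ since $-(1-a)=k-m+(s-p)/(r-s)$, and the extra $(-1)^{j}$ combines with $(-1)^{k+l}=(-1)^{k+i+j}$ to produce the stated sign $(-1)^{k+i}$.

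The main obstacle is proving the polynomial identity for $P_{k+1}(x;a)$. It follows by applying the binomial theorem to the inner sum in the definition of $q_{k+1,i}(a)$, which collapses $\sum_{i=0}^{l}\binom{l}{i}(1-a)^{l-i}x^{i}$ to $(x+1-a)^{l}$, and then recognizing the outer sum $\sum_{l=0}^{k}(-1)^{k+l}S_{k}^{(l)}(x+1-a)^{l}$ as the falling-factorial generating function $\sum_{l}S_{k}^{(l)}z^{l}=z(z-1)\cdots(z-k+1)$ evaluated at $z=a-1-x$, with one final sign bookkeeping to recover $\prod_{j=1}^{k}(x-(a-j))/k!$. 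All other steps --- the substitution, the identification with Barnes zeta, and the combinatorial reindexing --- are mechanical.
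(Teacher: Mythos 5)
Your proof is correct, but it follows a genuinely different route from the paper's. The paper proceeds constructively: it introduces the functionals ${}_{k}^{t}J_{m}^{n}(r,s,p)$, derives recurrence relations for them by differentiating with respect to $r$ and $s$, anchors a double induction at the tabulated integral $\int_0^\infty\frac{e^{px}-e^{qx}}{e^{rx}-e^{sx}}\,dx$ expressed through digamma functions, and only at the end converts polygammas into Hurwitz zetas via $\psi^{(n)}(x)=(-1)^{n+1}n!\,\zeta(n+1,x)$. You instead observe, after the substitution $u=(r-s)x$, that the integral is exactly $n!\,(r-s)^{-(n+1)}\zeta_{k+1}(n+1,a)$ with $a=k-m+\frac{r-p}{r-s}$, and you reach the stated Hurwitz argument $a_0=\frac{r-p}{r-s}$ by combining the Choi decomposition with the shift $\zeta(s,a)=\zeta(s,a_0)-\sum_{l=0}^{k-m-1}(a_0+l)^{-s}$ and the factorization $\sum_{i}q_{k+1,i}(a)x^{i}=\frac{1}{k!}\prod_{j=1}^{k}\bigl(x-(a-j)\bigr)$, whose roots $a-1,\dots,a-k$ absorb every correction term; your sign and index bookkeeping between the two displayed forms checks out. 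Your route is shorter and more structural --- it explains why Barnes zeta functions and Choi coefficients appear at all --- and the root-cancellation step delivers the paper's first Corollary (the equivalence of the shifted representations $Z_k(n,a,t)$) essentially for free, whereas the paper obtains that only afterwards as a consequence of the Lemma. The one external input you invoke, $\zeta_{k+1}(s,a)=\sum_{i}q_{k+1,i}(a)\zeta(s-i,a)$, is precisely the Choi--Srivastava result the paper already cites in Section III, and your polynomial identity lets you reprove it from scratch by expanding $(1-e^{-u})^{-(k+1)}$ binomially under the integral, so there is no circularity; what the paper's longer inductive argument buys in exchange is independence from that citation and explicit polygamma-level intermediate formulas.
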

\begin{proof}
 Let us begin with the multiparametric functional
\begin{equation}\label{funcional}
_k^tJ_m^n(r,s,p)=\int_0^\infty \frac{x^ne^{(m-k)rx}e^{(m-t)sx}e^{px}}{(e^{rx}-e^{sx})^m}dx
\end{equation}
with  $r>0, r>s, r>p$, y $n,m,k,t\in \mathbb{N}$, $n+1> m\ge k, t$. By deriving this functional with respect to $s$ and $r$, one readily finds some recurrence relations $_k^tJ_m^n(r,s,p)$ for it. 
For convenience, however, we first make the change of variable $y=(r-s)x$, because then one realizes that the functional only depends on $\frac{r-p}{r-s} $  when $m+1=k+t$, so that it is better to write 
\begin{equation}\label{eq:eqf0}
_{k}^{t}J_{k+t-1}^{n}(r,s,p)=\frac{1}{(r-s)^{n+1}}\ _{k}^{t}f_{k+t-1}^{n}\left(\frac{r-p}{r-s}\right),                                                                                                                   
\end{equation}
and then the abovementioned derivations yield the following recurrence relations:
\begin{eqnarray}
 \label{eq:eqf1}
_{k+1}^1f_{k+1}^{n+1}(x)&=&\frac{1}{k}\Big[\left(n+1+x\frac{d}{dx}\right)\ _{k}^1f_k^n(x)-(k-1)_{k}^1f_k ^{n+1}(x) \Big]\\
\label{eq:eqf2}
_k^{t+1}f_{k+t}^{n+1}(x)&=&\frac{1}{k+t-1}\Big[\left(n+1+(x-1)\frac{d}{dx}\right)\ _k^{t}f_{k+t-1}^n(x)+(t-1)_k^tf_{k+t-1}^{n+1}(x)\Big].
\end{eqnarray}
On the other hand we can obtain that
\begin{equation}\label{digamma}
_1^1f_1^n(r,s,p)=(-1)^{n+1}\psi^{(n)}\left(\frac{r-p}{r-s}\right),
\end{equation}
by noticing that $_1^1J_1^n(r,s,p)$ is the $n$-th derivative of the integral (see Eq. 3.311-11 of Ref. \cite{ryzhik})
\begin{eqnarray}
\nonumber
\int_0^\infty \frac {e^{px}-e^{qx}}{e^{rx}-e^{sx}} dx =\frac{1}{r-s}\left[\psi\left(\frac{r-q}{r-s}\right)-\psi\left(\frac{r-p}{r-s}\right)\right],
\end{eqnarray}
with respect to $p$. The symbol $\psi^{(n)}(x)$ denotes the $n$th derivative of the digamma function \cite{olver}.\\

The recurrence relation (\ref{eq:eqf1}) with the initial condition (\ref{digamma}) gives rise by induction to
\begin{eqnarray} \label{eq:eqf3}
_k^1f_k^n(x)&=&\frac{(-1)^{k+n}}{(k-1)!}\sum_{j=0}^{k-1}\sum_{i=0}^{k-j-1}{n \choose j}\frac{(i+j)!}{i!}S_{k-1}^{(i+j)}(x+k-2)^{i} \psi^{(n-j)}(x).
\end{eqnarray}
Then, the recurrence relation (\ref{eq:eqf3}) in $t$ with the initial ($t=0$) condition allows us to obtain also by induction the expression
\begin{eqnarray}\label{eq:eqf4}
_k^{t+1}f_{k+t}^n(x) &=&\frac{(-1)^{k+t+n}}{(k+t-1)!} \sum_{j=0}^{k+t-1}\sum_{i=0}^{k+t-j-1}{n\choose j}S_{k+t-1}^{(i+j)}\frac{(i+j)!}{i!}(x+k-2)^{i}\ \psi^{(n-j)}(x). 
\end{eqnarray}
Now the replacement of (\ref{eq:eqf4}) into (\ref{eq:eqf0}), taking into account that $\psi^{(n)}(x)=(-1)^{n+1}n!\zeta(n+1,x)$ and redefining the involved parameters in a convenient manner, we finally obtain the wanted expression (\ref{eq:intbbd1}): 
\begin{eqnarray*}
\int_0^\infty \frac{x^ne^{mrx}e^{(k-m)sx}e^{px}}{(e^{rx}-e^{sx})^{k+1}}\,dx &=&  \frac{1}{(r-s)^{n+1}}\frac{n!}{k!}   \sum_{i=0}^{k}\sum_{j=0}^{k-i}(-1)^{k+i}{i+j \choose j}S_k^{(i+j)}\left(k-m+\frac{s-p}{r-s}\right)^j   \zeta\left(n+1-i,\frac{r-p}{r-s}\right) 
\end{eqnarray*}
where $n,m,k \in \mathbb{N}$, $n>k\ge m$ y $r,s,p \in \mathbb{R}$, con $r>s, r>p$. And from this expression and Eq. (\ref{choicoeff}) follows the second expression of the Lemma.
\end{proof}

\begin{corollary}
Let $k\in \mathbb{N}$, $a\in\mathbb{R}$, $n\in \mathbb{N}$. Then, the following finite sum of standard Hurwitz functions $\zeta(s,a)$  
 \begin{equation}
 Z_k(n,a,t)=\sum_{i=0}^{k-1}q_{k,i}(a)\zeta(n-i,t),
\end{equation}
verifies
\begin{align}
 \zeta_{k}(n,a)&=Z_k(n,a,\{a\})=Z_k(n,a,1+\{a\})=\ldots=Z_k(n,a,a), 
\end{align}
$\forall a\in\mathbb{R}/\mathbb{N}$ (with $\{a\}\equiv a-[a]$ being the non-integer part of $a$),and
\begin{align}
 \zeta_{k}(n,a)&=Z_k(n,a,1)=Z_k(n,a,2) =\ldots=Z_k(n,a,a), \quad \forall a\in\mathbb{N}.
\end{align}
\end{corollary}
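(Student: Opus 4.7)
My plan is to extract every equality in the chain as a separate specialization of the Lemma. The guiding observation is that once we take $r=1$ and $s=0$ in identity (\ref{eq:intbbd1}), the left-hand side depends on $(m,p)$ only through the combination $m+p$, so many admissible choices of $m$ yield the \emph{same} integral but produce \emph{different} Hurwitz-shifts on the right.

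Concretely, I would apply the Lemma with $r=1$, $s=0$, and with Lemma's indices $k$ and $n$ replaced by $k-1$ and $n-1$. For any integer $m\in\{0,1,\dots,k-1\}$ set $p=k-a-m$, so that $m+p=k-a$. The integrand on the left becomes
\[
\frac{x^{n-1}e^{(k-a)x}}{(e^{x}-1)^{k}},
\]
which is precisely the integrand in the integral representation of $\zeta_k(n,a)$ quoted just above the statement; hence the left-hand side equals $\Gamma(n)\,\zeta_k(n,a)$. On the right, the argument of $q_{k,i}$ collapses to $(k-1)-m+(1-p)=a$, while the Hurwitz argument becomes $1-p=a-k+1+m$. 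Cancelling the common factor $(n-1)!$ gives
\[
\zeta_k(n,a) \;=\; \sum_{i=0}^{k-1}q_{k,i}(a)\,\zeta\bigl(n-i,\,a-k+1+m\bigr) \;=\; Z_k\bigl(n,a,\,a-k+1+m\bigr),
\]
for every admissible $m$, which is the desired identity with $t=a-k+1+m$.

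Next I would identify which $t$ are attained as $m$ varies. The range $0\le m\le k-1$ forces $a-k+1\le t\le a$, while the Lemma's convergence requirement $r>p$ translates into the strict inequality $t>0$. Intersecting these two constraints yields exactly $t\in\{\{a\},\,1+\{a\},\dots,\,a\}$ when $a\notin\mathbb N$ and $t\in\{1,2,\dots,a\}$ when $a\in\mathbb N$, matching the corollary verbatim. Since the left-hand side $\zeta_k(n,a)$ is the same in every specialization, the corresponding $Z_k(n,a,t)$'s must coincide, which is the announced chain of equalities.

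The main obstacle is purely bookkeeping: one has to verify carefully that all hypotheses of the Lemma ($n-1>k-1$, $a>0$ so that the integrand decays at infinity, and $r>p$) are met for each admissible $m$, and that in the integer case $a\in\mathbb N$ the endpoint $t=0$ is naturally excluded (it would correspond to $m=k-a-1$, which violates $r>p$ strictly and therefore is \emph{not} produced by the Lemma). Apart from this admissibility check, the proof reduces to the algebraic identity furnished by the Lemma together with the integral representation of $\zeta_k(n,a)$ recalled in the text, so no further non-trivial analytic work is required.
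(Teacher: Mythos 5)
Your proposal is correct and follows essentially the same route as the paper: specialize the Lemma to $r=1$, $s=0$, match the resulting integral with the Barnes integral representation of the multiple zeta function via $m+p=k-a$, and let $m$ run over its admissible range (constrained by $r>p$, i.e.\ $t=1-p>0$) to generate the chain of equal $Z_k(n,a,t)$'s. The only difference is a cosmetic index shift ($k\to k-1$, $n\to n-1$) relative to the paper's formulation in terms of $\zeta_{k+1}(n+1,a)$.
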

\begin{proof}
Using the previous Lemma with $s=0$, $r=1$ and $p<1$ for $k,m,n \in \mathbb{N}_0$ and $n> k\ge m$ one has that
\begin{eqnarray}\label{repint}
\frac 1{n!}\int_0^\infty \frac{x^{n}e^{(m+p)x}}{(e^x-1)^{k+1}}dx &=& \sum_{i=0}^{k}q_{k+1,i}(k-m-p+1)\zeta(n+1-i,1-p). 
\end{eqnarray}
On the other hand, taking into account the integral representation \cite{barnes} with $m+p=k+1-a$ we can write 
\begin{eqnarray}
 \zeta_{k+1}(n+1,a)&=&\frac 1{\Gamma(n+1)}\int_0^\infty \frac{x^{n}e^{(m+p)x}}{(e^x-1)^{k+1}}\,dx = \sum_{i=0}^{k}q_{k+1,i}(a)\zeta(n+1-i,1-p)
\end{eqnarray}
(where the last identity holds provided that $1>p=k+1-m-a\ge 1-a$) and using the notation $a-[a]\equiv\{a\}$, where $[a]$ denotes the integer part of $a\in \mathbb{R}$, it is straightforward to see that $p+\{a\}=k+1-m-[a]\equiv n' \in \mathbb{N}$. The latter implies that, due to the conditions $0\le m \le k$ and $p<1$, the values of $p$ are limited to $p= 1-a,\, 2-a,...,[a]+1-a<1, \,\, \forall a\in\mathbb{R}/\mathbb{N}$; for $a\in\mathbb{N}$ the inequality is fulfilled for $p=1-a, \cdots, [a]-a=0<1$. 
Thus, we have proved that
$$ \zeta_{k+1}(n+1,a)=\sum_{i=0}^{k}q_{k+1,i}(a)\zeta(n+1-i,1-p)$$
where $p$ can take the values $p = 1-a,\, 2-a,...,\,1-\{a\}$, save when $a\in\mathbb N$ in which case  $p = 1-a,\, 2-a,...,\,0$ so that then one has
\begin{eqnarray}
\label{eq:Z}
 \zeta_{k}(n,a)&=&Z_k(n,a,\{a\})=Z_k(n,a,1+\{a\}) =Z_k(n,a,2+\{a\}) = \ldots =Z_k(n,a,a)
\end{eqnarray} 
$\forall a\in\mathbb{R}/\mathbb{N}$, and
\begin{eqnarray}
\label{eq:ZN}
 \zeta_{k}(n,a)&=&Z_k(n,a,1)=Z_k(n,a,2)  = \ldots =Z_k(n,a,a)\quad \forall a\in\mathbb{N}.
\end{eqnarray} 
\end{proof}
\begin{corollary}
For  $k \in \mathbb{N},n \in \mathbb{N},m \in \mathbb{N}_0$, and $n+1> k> m  \ge 0$ one has that
\[
\frac 1 {n!}\int_0^\infty \frac{x^ne^{mx}}{(e^{x}-1)^{k}}dx=
 \sum_{i=0}^{k-1}q_{k,i}(k-m)\, \zeta(n+1-i).
\]
\end{corollary}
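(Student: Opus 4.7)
The strategy is to identify the left-hand side directly with a rescaled Barnes zeta function and then apply the integer-argument identity from the previous corollary. Specifically, I would start by invoking the integral representation of $\zeta_N(s,a)$ recorded earlier in the paper,
\[
\zeta_{N}(s,a) = \frac{1}{\Gamma(s)}\int_0^\infty \frac{x^{s-1}\, e^{(N-a)x}}{(e^x-1)^{N}}\, dx,
\]
and set $N=k$, $s=n+1$, and $a=k-m$. With this identification the numerator becomes $x^n e^{mx}$ and the denominator $(e^x-1)^k$, so one reads off
\[
\frac{1}{n!}\int_0^\infty \frac{x^n e^{mx}}{(e^x-1)^k}\,dx \;=\; \zeta_k(n+1,\,k-m).
\]
The hypotheses $n+1>k$ and $k>m\ge 0$ secure both the convergence condition $\mathrm{Re}(s)>N$ at the origin and the positivity $a=k-m\in\mathbb{N}$ needed for decay at infinity.

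Next, because $k-m$ is a positive integer, I would invoke the $a\in\mathbb{N}$ branch of the preceding corollary, which asserts
\[
\zeta_{k}(n',a) \;=\; Z_k(n',a,1) \;=\; \sum_{i=0}^{k-1} q_{k,i}(a)\,\zeta(n'-i,1) \;=\; \sum_{i=0}^{k-1} q_{k,i}(a)\,\zeta(n'-i),
\]
since $\zeta(s,1)=\zeta(s)$. Substituting $n'=n+1$ and $a=k-m$ and combining with the previous display yields exactly the claimed identity
\[
\frac{1}{n!}\int_0^\infty \frac{x^n e^{mx}}{(e^x-1)^k}\,dx \;=\; \sum_{i=0}^{k-1} q_{k,i}(k-m)\,\zeta(n+1-i).
\]

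There is no serious obstacle here; the statement is essentially a relabelling of the integer-argument case of the previous corollary, repackaged in the form in which the integral appears when computing Rényi entropies of the generalized Planck density. The only technical points worth checking are the strict inequality $n+1>k$, which is precisely what prevents an infrared divergence at $x=0$, and the bound $k-m\ge 1$, which ensures exponential decay at infinity. Alternatively, one could bypass the previous corollary altogether and derive the identity from the master Lemma by specializing $r=1$, $s=0$, $p=0$ and shifting the summation index $k\mapsto k-1$; the two routes produce the same result, but routing through the Barnes zeta function makes the role of the hypothesis $k-m\in\mathbb{N}$ transparent.
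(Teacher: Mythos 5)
Your proof is correct. Your primary route --- reading the integral off as the Barnes zeta value $\zeta_k(n+1,k-m)$ via its integral representation and then invoking the $a\in\mathbb{N}$ branch of the preceding corollary to rewrite $\zeta_k(n+1,k-m)=Z_k(n+1,k-m,1)=\sum_{i=0}^{k-1}q_{k,i}(k-m)\,\zeta(n+1-i)$ --- is not the one the paper takes; the paper uses exactly the one-line specialization you mention only as an alternative at the end, namely setting $r=1$, $s=p=0$ in the Lemma, so that $\frac{r-p}{r-s}=1$, the Hurwitz zetas collapse to Riemann zetas, and a relabelling $k+1\mapsto k$ gives the statement. The two routes are equivalent in substance, since the first corollary is itself deduced from the Lemma; the paper's route is shorter, while yours makes transparent why the answer is a Barnes zeta function evaluated at the integer second argument $k-m$ and why the hypothesis $k>m$ is precisely what licenses the reduction to ordinary Riemann zetas via the $a\in\mathbb{N}$ case. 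Your convergence checks ($n+1>k$ controlling the behaviour at the origin, $k-m\ge 1$ the decay at infinity) correctly account for the stated hypotheses.
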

This result directly follows from the previous Lemma with $r=1$ and $s=p=0$.

\end{document}